\documentclass[11pt]{article}

\usepackage[utf8]{inputenc}
\usepackage[T1]{fontenc}
\usepackage[margin=1in]{geometry}
\usepackage{amsmath,amssymb,amsthm}
\usepackage{booktabs}
\usepackage{graphicx} 
\usepackage{xcolor}
\usepackage{eso-pic}
\usepackage{tikz}
\usetikzlibrary{arrows.meta, positioning}
\usepackage[hidelinks]{hyperref}

\theoremstyle{plain}
\newtheorem{proposition}{Proposition}
\newtheorem{corollary}{Corollary}
\theoremstyle{definition}
\newtheorem{assumption}{Assumption}

\newcommand{\PhaseZeroSpend}{5.13}
\newcommand{\PZaScored}{89}
\newcommand{\PZaAdr}{0\%}
\newcommand{\PZaOracleAdr}{44\%}
\newcommand{\PZaElasticity}{0.000}
\newcommand{\PZaRegret}{2520}
\newcommand{\PZbScored}{5}
\newcommand{\PZbElasticity}{0.976}
\newcommand{\PZcScored}{100}
\newcommand{\PZcEpisodes}{100}
\newcommand{\PZcAdr}{42\%}
\newcommand{\PZcOracleAdr}{43\%}
\newcommand{\PZcElasticity}{0.992}
\newcommand{\PZcMarkerAuc}{0.505}
\newcommand{\PZcJudgeAuc}{0.500}
\newcommand{\PZcFlags}{0\%}

\newcommand{\ResHoneFull}{SUPPORTED}
\newcommand{\HoneFullMetaAdr}{60\%}
\newcommand{\ResHoneLadder}{SUPPORTED}

\newcommand{\HoneLadderFlag}{0\%}
\newcommand{\HoneLadderAuc}{0.50}
\newcommand{\HoneLadderAdrHaiku}{60\%}
\newcommand{\HoneLadderAdrFable}{62\%}
\newcommand{\ResHtwo}{REFRAMED}
\newcommand{\HtwoCriticMeta}{25\%}
\newcommand{\HtwoGateMeta}{50\%}
\newcommand{\HtwoDetCrossC}{100\%}
\newcommand{\HtwoDetCrossG}{100\%}
\newcommand{\HtwoDetDupC}{0\%}
\newcommand{\HtwoDetDupG}{0\%}
\newcommand{\HtwoDetMissingC}{72\%}
\newcommand{\HtwoDetMissingG}{100\%}
\newcommand{\HtwoDetOutlierC}{0\%}
\newcommand{\HtwoDetOutlierG}{0\%}
\newcommand{\HtwoDetSchemaC}{0\%}
\newcommand{\HtwoDetSchemaG}{100\%}
\newcommand{\HtwoDetUnitC}{0\%}
\newcommand{\HtwoDetUnitG}{0\%}
\newcommand{\HtwoDetStaleC}{0\%}
\newcommand{\HtwoDetStaleG}{100\%}
\newcommand{\HtwoDetSupersededC}{100\%}
\newcommand{\HtwoDetSupersededG}{100\%}
\newcommand{\ResHthree}{NOT SUPPORTED}
\newcommand{\HthreeGateDet}{62\%}
\newcommand{\HthreeCriticDet}{31\%}
\newcommand{\HthreeGateResid}{294}
\newcommand{\HthreeCriticResid}{311}
\newcommand{\HthreeFalseBlock}{0\%}
\newcommand{\ResHfour}{NOT SUPPORTED}
\newcommand{\HfourRecovery}{-0.04}
\newcommand{\HfourStaleLossA}{134}
\newcommand{\HfourStaleLossD}{-0.5}

\newcommand{\LadderEndpointDiff}{0.0125}
\newcommand{\LadderEndpointCILo}{-0.0551}
\newcommand{\LadderEndpointCIHi}{0.0801}
\newcommand{\LadderTrendPerTier}{0.0040}
\newcommand{\HtwoDetN}{100}
\newcommand{\HtwoCICrossCLo}{96}
\newcommand{\HtwoCICrossCHi}{100}
\newcommand{\HtwoCICrossGLo}{96}
\newcommand{\HtwoCICrossGHi}{100}
\newcommand{\HtwoCIDupCLo}{0}
\newcommand{\HtwoCIDupCHi}{4}
\newcommand{\HtwoCIDupGLo}{0}
\newcommand{\HtwoCIDupGHi}{4}
\newcommand{\HtwoCIMissingCLo}{63}
\newcommand{\HtwoCIMissingCHi}{80}
\newcommand{\HtwoCIMissingGLo}{96}
\newcommand{\HtwoCIMissingGHi}{100}
\newcommand{\HtwoCIOutlierCLo}{0}
\newcommand{\HtwoCIOutlierCHi}{4}
\newcommand{\HtwoCIOutlierGLo}{0}
\newcommand{\HtwoCIOutlierGHi}{4}
\newcommand{\HtwoCISchemaCLo}{0}
\newcommand{\HtwoCISchemaCHi}{4}
\newcommand{\HtwoCISchemaGLo}{96}
\newcommand{\HtwoCISchemaGHi}{100}
\newcommand{\HtwoCIUnitCLo}{0}
\newcommand{\HtwoCIUnitCHi}{4}
\newcommand{\HtwoCIUnitGLo}{0}
\newcommand{\HtwoCIUnitGHi}{4}
\newcommand{\HtwoCIStaleCLo}{0}
\newcommand{\HtwoCIStaleCHi}{4}
\newcommand{\HtwoCIStaleGLo}{96}
\newcommand{\HtwoCIStaleGHi}{100}
\newcommand{\HtwoCISupersededCLo}{96}
\newcommand{\HtwoCISupersededCHi}{100}
\newcommand{\HtwoCISupersededGLo}{96}
\newcommand{\HtwoCISupersededGHi}{100}
\newcommand{\BootReplicates}{10000}
\newcommand{\BootSeed}{20260728}
\newcommand{\HthreeGateResidLo}{104.1}
\newcommand{\HthreeGateResidHi}{498.0}
\newcommand{\HthreeCriticResidLo}{120.3}
\newcommand{\HthreeCriticResidHi}{516.3}
\newcommand{\HfourStaleLossALo}{-31.9}
\newcommand{\HfourStaleLossAHi}{305.2}
\newcommand{\HfourStaleLossDLo}{-3.7}
\newcommand{\HfourStaleLossDHi}{1.9}

\newcommand{\ConvMAE}{0.015}
\newcommand{\ConvR}{0.876}
\newcommand{\ConvCICoverage}{15/16}
\newcommand{\CovRecovery}{-0.0411}
\newcommand{\CovResidual}{0.000000}
\newcommand{\CovDominant}{\texttt{silent\_unit\_change}}
\newcommand{\CovDominantSharePct}{83}
\newcommand{\LooFull}{-0.0411}
\newcommand{\LooTopClass}{\texttt{silent\_unit\_change}}
\newcommand{\LooTopRecovery}{0.6219}
\newcommand{\ThreshReg}{0.005}
\newcommand{\ThreshRegAdr}{0.6175}
\newcommand{\ThreshLoAdr}{0.6225}
\newcommand{\ThreshHiAdr}{0.3325}
\newcommand{\StressIntact}{0.8333}
\newcommand{\StressAllDegraded}{0.6667}
\newcommand{\StressFragile}{\texttt{stale\_master\_data}}
\newcommand{\FalsGate}{0.2715}
\newcommand{\FalsRandom}{0.1636}
\newcommand{\FalsShuffled}{0.1336}
\newcommand{\FalsAnti}{0.1173}
\newcommand{\FalsGap}{0.1079}
\newcommand{\SecondADR}{0.3146}
\newcommand{\SecondRecovery}{1.0000}
\newcommand{\SecondExpiredADR}{0.9313}
\newcommand{\SecondConsentADR}{0.9313}

\newcommand{\ci}[2]{~{\scriptsize[#1,\,#2]}}

\title{\textbf{SARC-DQ: Runtime Data-Quality Gating for Agentic AI}\\
  \large Silent evidence defects, the incompetence shield, and downstream-only remediation}
\author{Gaston Besanson\\ Universidad Torcuato Di Tella}
\date{Working paper --- \today}

\begin{document}
\maketitle

\begin{abstract}
\textbf{Problem.} Agentic systems act, so a defect in the evidence they retrieve becomes
a wrong action with a currency cost. \textbf{Gap.} The most dangerous enterprise defects
are \emph{metadata-borne}---a stale price, a superseded record---perfectly well-formed in
the payload and betrayed only by freshness, lineage, or provenance. Such a defect never
enters the agent's context, so a payload-only reader cannot detect it regardless of model
scale. \textbf{Idea.} We argue that the fix is architectural: a cheap metadata-aware gate
placed \emph{at the point of action} enforces evidence quality where a stronger model
cannot, remediating \emph{downstream-only} (quarantine-and-substitute in a governed buffer;
zero writes to source). \textbf{Evidence.} On a priced replenishment task we measure the
\emph{silent conversion} of a metadata-borne defect into a loss against a same-seed clean
counterfactual, across a four-model tier ladder and eight defect classes.
\textbf{Result.} Across four model tiers spanning approximately $15\times$ in inference
price (haiku$\to$sonnet$\to$opus$\to$fable), metadata-borne action-defect rates remain flat
to slightly rising, while behavioral doubt markers stay at chance (AUC
$\le\HoneLadderAuc$) and explicit data-quality flags remain absent ($\HoneLadderFlag$). A
metadata-aware Pre-Action Gate outperforms a realistic payload-only critic on the freshness
and schema signals it covers; uncovered unit-consistency and plausible-outlier defects
remain unresolved, and portfolio-level recovery is not achieved. A model-free analytical
oracle derived from the task's decision geometry tracks the measured agent ADR with MAE
\ConvMAE{} and Pearson $r=\ConvR$, the measured interval covering the prediction in
\ConvCICoverage{} cells, giving the flat ladder an analytical form. \textbf{Impact.} These
results identify evidence integrity as a distinct systems axis from model capability, and
show that mitigation effectiveness depends on enforcement placement and predicate coverage.
\end{abstract}

\section{Introduction}
An agent differs from a chatbot in one decisive way: it \emph{acts}. A chatbot that
reasons over a stale price emits a wrong sentence; an agent that reasons over a
stale price places a wrong order, and the loss is realised in currency. The
literature on agent robustness has studied degradation under generic noise, but the
defects that matter most in enterprise data are not noisy---they are
\emph{plausible}. A price that is 120 days stale is a perfectly well-formed number;
nothing in its payload betrays it. The tell is in the \emph{metadata}: freshness,
lineage, provenance, version. We call such defects \emph{metadata-borne}, and
contrast them with \emph{payload-visible} defects (duplicates, contradictions,
schema drift) detectable from content alone.

\paragraph{The silent conversion chain.} A metadata-borne defect propagates:
\[
\text{data defect} \;\rightarrow\; \text{(invisible in payload)} \;\rightarrow\;
\text{agent acts} \;\rightarrow\; \text{action defect} \;\rightarrow\; \text{loss}.
\]
Each arrow is measurable. We price the last one against a same-seed clean
counterfactual, so ``loss'' is a currency delta, not a task-success proxy
(Fig.~\ref{fig:chain}).

\begin{figure}[t]
\centering
\begin{tikzpicture}[
  >={Latex[length=2mm]},
  box/.style={draw, rounded corners, align=center, minimum height=7mm,
              text width=17mm, inner sep=2pt, font=\scriptsize},
  bad/.style={box, fill=red!7},
  good/.style={box, fill=green!8},
  gate/.style={box, fill=blue!8, very thick, text width=20mm},
  lbl/.style={font=\scriptsize\itshape, align=center},
]
\node[bad] (rec)   {stale price\\ \tiny(payload valid)};
\node[bad, right=9mm of rec]   (agent) {agent acts\\confidently};
\node[bad, right=9mm of agent] (order) {wrong order};
\node[bad, right=9mm of order] (loss)  {currency loss};
\draw[->] (rec)   -- (agent);
\draw[->] (agent) -- (order);
\draw[->] (order) -- (loss);
\node[lbl, above=1mm of agent] {invisible in payload};
\node[lbl, above=1mm of order] {no expressed doubt};
\node[gate, below=11mm of agent] (gate) {Pre-Action Gate\\ \tiny(metadata-aware)};
\node[good, right=13mm of gate] (corr) {corrected order};
\draw[->] (rec) |- (gate);
\draw[->] (gate) -- node[lbl, above]{detect + substitute} (corr);
\end{tikzpicture}
\caption{The \emph{silent conversion} of a metadata-borne defect (top): the payload is
valid, so a payload-only agent acts confidently and converts the defect into a currency
loss. A metadata-aware Pre-Action Gate (bottom) intercepts the same record where the
signal actually lives---the metadata---and substitutes a governed value, yielding a
corrected action. Model scale moves along the top row; enforcement placement adds the
bottom one.}
\label{fig:chain}
\end{figure}

\paragraph{Competence buys conversion, not detection: the incompetence shield.}
One might hope a smarter agent would notice the bad price. We observe that it does
not---the discriminating signal is absent from the payload context at every tested model
tier. Worse,
making the agent \emph{better at its task} makes it \emph{more} vulnerable: an agent
that faithfully tracks the price it is given (high decision elasticity) inherits the
corruption, while an agent that ignores the price (low elasticity) is accidentally
shielded by its own incompetence---the \emph{incompetence shield}. We therefore argue
that detection must be placed where the signal lives---on the metadata, at the point of
action---not bought with model scale.

\paragraph{Three claims.} Everything below serves three messages, in order of
importance. \textbf{(M1)}~Metadata-borne defects are \emph{structurally} invisible to a
payload-only agent (we argue this from the architecture and show it in the channel
measurements). \textbf{(M2)}~Moving to higher tested model tiers does not fix M1---silence
and loss-conversion are flat across the four-model tier ladder (we observe this,
\S\ref{sec:results}; the paper's headline). \textbf{(M3)}~The correct remedy is architectural: runtime
metadata enforcement at the point of action (we argue this and give a gate that beats a
realistic payload-only critic on its covered channel). The benchmark, formal invariants,
economics, and governance mapping are \emph{support} for these three, not competing
claims.

\paragraph{Two kinds of contribution.} We separate what is scientific from what is a
research artifact, so each can be judged on its own terms.
\begin{table}[t]\centering\small
\setlength{\tabcolsep}{5pt}
\begin{tabular}{@{}p{0.30\linewidth}p{0.34\linewidth}p{0.28\linewidth}@{}}
\toprule
\textbf{Contribution} & \textbf{Evidence / where} & \textbf{Validation} \\
\midrule
\multicolumn{3}{@{}l}{\emph{Scientific}}\\
Silent conversion of metadata-borne defects (M1) & Channel model + H2 measurement (\S\ref{sec:arch},\,\ref{sec:results}) & Pre-registered; measured \\
Capability does not buy skepticism (M2) & Four-model ladder, H1 (\S\ref{sec:results}) & Pre-registered; measured \\
Enforcement-placement thesis (M3) & Gate vs.\ realistic critic, H3/H4 (\S\ref{sec:results}) & Argued; partially measured \\
Incompetence shield & Phase 0 elasticity contrast (\S\ref{sec:pilot}) & Measured (pilot) \\
\midrule
\multicolumn{3}{@{}l}{\emph{Research artifact}}\\
SARC-DQ predicate family + Pre-Action Gate & \S\ref{sec:arch}; \texttt{src/sarc\_dq/} & Typed, tested, mock+live \\
Downstream-only remediation with lineage & Architectural invariants (\S\ref{sec:formal}) & Proved (Prop.~\ref{prop:lineage}) \\
GIGO-Bench (frozen, priced benchmark) & \S\ref{sec:gigo} & \texttt{make gigo-verify} \\
Reproducible, provenance-linked pipeline & \S\ref{sec:repro}; \texttt{EXPERIMENT\_STATUS.md} & Generated, deterministic \\
\bottomrule
\end{tabular}
\caption{Scientific claims vs.\ the research artifact. Every paper value is generated
from a committed \texttt{results/<exp>-live} summary; nothing is hand-entered.}
\label{tab:contrib}
\end{table}

\section{Related work}
\paragraph{Agent robustness under noise.} AgentNoiseBench~\cite{agentnoisebench}
finds tool-side noise degrades agents more than user-side noise, and that strong
reasoning models make \emph{confident} errors under corrupted tool feedback;
NoisyToolBench~\cite{noisytoolbench}, ToolEmu~\cite{toolemu}, and
``Tools Fail''~\cite{toolsfail} study silent tool errors;
$\tau$-bench / $\tau^2$-bench~\cite{taubench} provide agentic task
environments. These works inject \emph{generic} noise and measure \emph{task-success}
deltas. Our delta is precise: we isolate the \emph{metadata-borne} channel
(invisible-in-payload defects), price failures in \emph{currency} rather than
task-success, and supply and test an \emph{architectural remedy} with downstream-only
remediation.
\paragraph{Action-side runtime enforcement.} AgentSpec- and
MI9-class systems~\cite{agentspec,mi9} and LLM guardrails enforce constraints on what the agent
is about to \emph{do}. SARC-DQ addresses the complementary \emph{input-side} gap:
whether the evidence the action rests on is trustworthy.
\paragraph{Data validation and DQ dimensions.} ISO/IEC 5259~\cite{iso5259}
frames data-quality dimensions; the data-cleaning benchmarks we use for Tier-2
validation---the Raha/Baran suite~\cite{raha}, multi-source-conflict
datasets, Magellan/DeepMatcher~\cite{magellan} entity matching, and ALFRED
archival vintages~\cite{alfred}---supply labeled real errors.
\paragraph{Taxonomy grounding.} The eight corruption classes are not invented: they
map onto the canonical data-quality taxonomies. Rahm and Do~\cite{rahmdo2000} split
errors along single/multi-source $\times$ schema/instance --- our metadata-borne
classes are the single-source instance-level cell; cross-source contradiction and
duplicate-conflicting-terms are the multi-source cell. Kim et al.~\cite{kim2003}
enumerate 33 dirty-data types; each of our classes maps to a Kim type. The predicate
dimensions (freshness, completeness, consistency, lineage) follow Wang and
Strong~\cite{wangstrong1996} and ISO~8000~\cite{iso8000}; Sambasivan et
al.~\cite{sambasivan2021} motivate why upstream data defects cascade into
high-stakes AI failure. \textbf{Prevalence.} Aggregate real-world corruption is
high: 47\% of newly created records carry at least one critical, work-impacting
error and only 3\% of data-quality scores are acceptable~\cite{nagle2017};
organizations self-estimate 17--32\% of data inaccurate~\cite{experian2017}; and
sources conflict on $\sim$70\% of items on the deep web~\cite{li2013}. GIGO's
sweep rates \{2, 5, 10, 20\%\} therefore sit at or below this band --- 20\% is its
low end, 2--10\% model well-governed single sources. \textbf{No injector parameter
is arbitrary}: each is computed from public labeled data (Raha/Baran~\cite{mahdavi2019},
Magellan/DeepMatcher~\cite{konda2016}, Dong Stock/Flight~\cite{li2013},
ALFRED~\cite{alfred} /
schema-evolution vintages~\cite{curino2013}), taken from cited literature, or declared
as a flagged default; the provenance of every value is in
\texttt{src/sarc\_dq/specs/taxonomy\_v1\_calibrated.yaml} and
\texttt{benchmarks/gigo/CALIBRATION.md}.

\section{Architecture}\label{sec:arch}
SARC-DQ places a data-quality predicate family on the four SARC enforcement sites
(Pre-Action Gate, Action-Time Monitor, Post-Action Auditor, Escalation Router). An
evidence record has two channels: a \emph{payload} (the field values a naive
consumer sees) and \emph{metadata} (freshness, lineage, provenance, version). A
corruption class declares a \emph{channel}: payload-visible or metadata-borne. This
tag is load-bearing---a payload-only critic structurally cannot see a metadata-borne
defect.

\paragraph{Predicates and responses.} Six parameterized predicates---%
\texttt{freshness(max\_age)}, \texttt{lineage\_present},
\texttt{golden\_record\_unique}, \texttt{cross\_source\_consistent(tol)},
\texttt{schema\_conformant}, \texttt{complete(required)}---are authored as a YAML
constraint spec (class $\in$ hard/soft/escalation; verification point; response
protocol; operating point). Responses are \emph{block}, \emph{degrade autonomy},
\emph{escalate}, or \emph{quarantine-and-substitute}.

\paragraph{Downstream-only remediation.} On \emph{quarantine-and-substitute}, the
gate replaces the offending value with a known-good value drawn \emph{only} from a
governed buffer---a downstream, versioned store keyed by content-addressed evidence
IDs. The gate is read-only over the evidence set and \textbf{never writes to any
source store}. Every admitted action logs the versioned evidence set it relied on,
giving full lineage from action back to the exact records.

\section{Architectural invariants}\label{sec:formal}
We state two properties of the architecture---not theorems we advertise as a
contribution, but the guarantees a reviewer should be able to check. The first is a
\emph{correctness guarantee} (remediation cannot corrupt lineage); the second is an
\emph{analytical decomposition} showing precisely which term runtime gating multiplies,
and why model capability does not appear in it.

\begin{assumption}\label{as:addr}
Each evidence record has a content-addressed identifier $\mathrm{eid}(r)=H(\text{payload}(r),\text{metadata}(r))$
for a collision-resistant $H$. Remediation creates new buffer records and never
mutates a source record.
\end{assumption}

\begin{proposition}[Lineage preservation under downstream-only remediation]\label{prop:lineage}
Under Assumption~\ref{as:addr}, for every admitted action $a$ there is a recorded
evidence set $E(a)$ whose identifiers resolve to the exact records $a$ relied upon,
and every source record referenced by $E(a)$ is bit-identical to its pre-action
state.
\end{proposition}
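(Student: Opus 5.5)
The argument splits the claim into its two conjuncts, existence and fidelity of the recorded evidence set, and immutability of the source records. Each follows from one clause of Assumption~\ref{as:addr} together with the architectural rules of \S\ref{sec:arch}. Those rules say that the gate is read-only over the evidence set, that substitution draws only from the governed buffer, and that every admitted action logs its versioned evidence set. We treat these rules as the operational semantics of the gate and argue by induction over the sequence of gate events: reads, quarantine-and-substitute operations, and admissions.

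\emph{Step 1 (the recorded set exists).} By the admission rule, when an action $a$ is admitted the gate emits the log entry $E(a)=\{\mathrm{eid}(r): r \text{ in the evidence set presented to the decision}\}$. This set mixes two kinds of record: unmodified source records, and buffer records created by substitution.

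\emph{Step 2 (the identifiers resolve exactly).} We need each logged identifier to resolve to precisely the record $a$ relied on. Since $\mathrm{eid}(r)=H(\text{payload}(r),\text{metadata}(r))$, two distinct records with the same identifier would be a collision of $H$. Collision resistance therefore makes resolution unique, except with negligible probability. Resolution is also time-stable. A source record is never mutated (Step~3), so its content, and hence its identifier, is the same at decision time and at audit time. A buffer record is created fresh and, because the buffer is versioned and content-addressed, an update to it is a new record with a new identifier. In particular, a substituted value receives its own identifier, distinct from that of the quarantined original; otherwise the quarantined and substituted records would collide. So the log records the substituted record rather than the original.

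\emph{Step 3 (source records are bit-identical to their pre-action state).} We show by induction over gate events that the set of source stores is invariant. A read does not write. A quarantine-and-substitute operation writes only to the governed buffer, by the second clause of Assumption~\ref{as:addr}. An admission writes only to the lineage log. No event writes to a source store, so every source record referenced by $E(a)$ equals its state before $a$.

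The main obstacle is that the statement is conditional on the gate implementation actually respecting these write and read permissions. The proposition can only assume them; it cannot derive them. I would make this explicit by stating the gate's effect signature (writes $\subseteq$ buffer $\cup$ log) as part of the model. I would also note that the guarantee is computational: it holds except with probability bounded by the collision advantage against $H$. Finally, the proof only covers sources being unchanged by the gate. Concurrent writes by external producers are outside its scope, and if they occur, content addressing still pins the exact version the action relied on.
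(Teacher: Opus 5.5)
Your proof is correct and follows the paper's argument. Both log $E(a)$ at admission, use collision resistance of $H$ to make each identifier resolve uniquely, and conclude source immutability because substitution only creates fresh buffer records and never writes to a source store. Your additions (the explicit write-effect signature, the negligible-collision caveat, and the versioned buffer that keeps resolution stable over time) make explicit what the paper's one-paragraph proof leaves implicit.
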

\begin{proof}
The gate logs $E(a)=\{\mathrm{eid}(r):r\in \text{evidence}(a)\}$ at admission. A
substitution introduces a new buffer record $r'$ with a fresh identifier
$\mathrm{eid}(r')$ and appends it to $E(a)$; it performs no write to any $r$ in a
source store (Assumption~\ref{as:addr}). Hence every $\mathrm{eid}\in E(a)$
resolves, by collision-resistance of $H$, to the unique record whose payload and
metadata hash to it; and since no source record was written, each such source
record equals its pre-action state. Therefore the map $a\mapsto E(a)\mapsto$
records is total and state-preserving. \qed
\end{proof}

\begin{proposition}[Evidence-error-to-ADR bound]\label{prop:bound}
Let $\varepsilon$ be the probability that an action's evidence carries a material
defect, and let $\rho\in[0,1]$ \emph{upper-bound} the conditional probability that such a
defect induces a material action deviation. Let $d$ be the gate's detection probability on
the defect's channel and $s$ the probability that a detected defect is successfully
substituted. Under the stated independence (defect independent of the demand realisation)
and non-interference (detection/recovery act only on detected defects) assumptions,
\[
\mathrm{ADR}_{\text{no gate}} \le \varepsilon\rho, \qquad
\mathrm{ADR}_{\text{gate}} \le \varepsilon\rho\,(1-ds).
\]
\end{proposition}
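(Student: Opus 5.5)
The plan is a law-of-total-probability decomposition over the events that determine whether an action deviates. Define, for a single action, the events $D$ (its evidence carries a material defect), $V$ (the action deviates materially), $G$ (the gate detects the defect on its channel) and $S$ (the detected defect is successfully substituted). The hypotheses read $\Pr(D)=\varepsilon$, $\Pr(V\mid D)\le\rho$, $\Pr(G\mid D)=d$ and $\Pr(S\mid D,G)=s$. I interpret ADR as the probability of a defect-induced action deviation, that is $\Pr(V\cap D)$. The premise that deviations on clean evidence are attributed to the clean counterfactual, so they do not count as defect-induced, is exactly what the same-seed clean counterfactual supplies. This interpretation is the first thing I would make explicit.

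For the ungated bound, I would write $\Pr(V\cap D)=\Pr(D)\Pr(V\mid D)\le\varepsilon\rho$. The independence assumption (defect independent of the demand realisation) is needed so that $\rho$ is a single conditional bound valid uniformly over demand draws. If $\rho$ is instead given per demand realisation $\omega$, I would first bound $\Pr(V\mid D,\omega)\le\rho$ and then integrate over $\omega$. Independence ensures that $\Pr(D\mid\omega)=\varepsilon$ for every $\omega$, so the integral is again $\varepsilon\rho$.

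For the gated bound, I would partition $D$ into $D\cap G\cap S$, which is recovered, and its complement within $D$, which has probability $\varepsilon(1-ds)$ because $\Pr(G\cap S\mid D)=ds$. On the recovered part, substitution replaces the defective value with a governed known-good one, so the action coincides with the clean-counterfactual action and contributes no defect-induced deviation. On the unrecovered part, non-interference says the gate does not alter the action on undetected defects, and a failed substitution leaves the original evidence in place, so the action is distributed as in the ungated system. Hence the ADR in the gated system is at most $\Pr(V\cap D\cap(G\cap S)^c)$, computed under the ungated dynamics.

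The main obstacle is bounding this last quantity by $\rho\,\Pr(D\cap(G\cap S)^c)$. That step requires $\Pr(V\mid D,(G\cap S)^c)\le\rho$, whereas we are only given $\Pr(V\mid D)\le\rho$. The gate could preferentially miss exactly those defects that cause deviation, which would break the bound. I would handle this by stating the needed condition explicitly, as part of the independence assumption: detection and substitution are conditionally independent of deviation given $D$. Equivalently, $\rho$ upper-bounds the deviation probability conditional on any gate outcome; the phrase ``$\rho$ upper-bounds'' is meant to cover this case. With that condition in hand, the bound $\varepsilon\rho(1-ds)$ follows at once. I would then add a remark that without it only the weaker $\min\{\varepsilon\rho,\varepsilon(1-ds)\}$ holds.
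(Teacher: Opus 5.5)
Your proposal is correct and follows the paper's own decomposition: a defect with probability $\varepsilon$, conversion at most $\rho$, then scaling by the unrecovered fraction $1-ds$. The conditional independence of gate success and conversion given a defect, which you add explicitly, is exactly what the paper's step ``the surviving fraction is $1-ds$, scaling the bound'' assumes tacitly---and you are right that the stated independence and non-interference assumptions do not supply it, since a gate that recovers only non-converting defects leaves $\Pr(V\cap D)$ unchanged, which can equal $\varepsilon\rho$.

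One minor correction: the ungated bound needs no independence. Here $\Pr(V\cap D)=\Pr(D)\Pr(V\mid D)$ is just the chain rule, and for a per-realisation bound $\int\Pr(D\mid\omega)\,dP(\omega)=\varepsilon$ holds by the tower property regardless of how $D$ depends on $\omega$. Independence genuinely enters only when $\rho$ bounds an average of $\Pr(V\mid D,\omega)$ over the marginal demand law.
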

\begin{proof}[Proof sketch]
A material action deviation requires both a material defect (prob.\ $\varepsilon$) and
conversion (prob.\ at most $\rho$, since $\rho$ upper-bounds the conditional); independence
gives the product $\varepsilon\rho$ as an upper bound. A gate removes a defect from the
loss path when it both detects ($d$) and successfully substitutes ($s$); the surviving
fraction is $1-ds$, scaling the bound. Both statements are upper bounds by construction of
$\rho$; we do not assert equality.
\end{proof}
\noindent\emph{Reading of the decomposition.} Model capability does not appear in
either bound. In the observed task, greater decision elasticity can \emph{raise} $\rho$
because the agent tracks the corrupted evidence more faithfully (the incompetence shield in
reverse); model capability does not, however, create a nonzero metadata-channel detection
term $d$ when that metadata is absent from context. Only \emph{placement}---the gate's $d$
and $s$---multiplies the ceiling by $1-ds$. This is the analytical form of M2 and M3:
in this setting scale moves $\rho$ the wrong way, while enforcement moves the factor that
matters.

\noindent\emph{Honesty note.} These are \emph{upper bounds}, not equalities: $\rho$
is defined as an upper bound and is not constant across episodes---the effective
conversion is elasticity-weighted, which
the pilot estimates empirically (elasticity $\PZcElasticity$ for the competent
agent). We do not claim a tight closed form; the bound's value is qualitative.

\begin{corollary}[Closed-form conversion is model-independent]\label{cor:conv}
On the newsvendor substrate the conversion factor $\rho$ admits an explicit oracle form: for
a corrupted episode the agent that acts on the newsvendor optimum for the \emph{shown} price
$\hat c$ incurs a paired loss $\ell = C(q^\star(\hat c)) - C(q^\star(c))$ against the same
realised demand, and the episode is material iff $\ell \ge \tau_m\,C(q^\star(c))$. The
per-cell mean of this indicator is a closed-form upper estimate of $\rho$ that \emph{contains
no model term}: the shown price $\hat c$ is fixed by the injected corruption, not by the
agent, so model tier cannot enter---metadata availability, not capability, is what is held
fixed. Appendix~\ref{app:analysis} shows this oracle conversion tracks the measured agent ADR
to a mean absolute error of $\ConvMAE$ (Pearson $r=\ConvR$; measured CI covers the prediction
in $\ConvCICoverage$ cells), which is the analytical form of M2.
\end{corollary}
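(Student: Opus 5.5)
The plan is to make the corollary's three parts precise: the per-cell mean of the materiality indicator upper-bounds $\rho$, it has a closed form, and it contains no model term. First I fix the newsvendor substrate. Let $p$ be the selling price, $c$ the true unit cost, $\hat c$ the shown (corrupted) cost, and $D$ the realised demand. The per-episode cost is $C(q)=c\,q-p\min(q,D)$, plus salvage or shortage terms if the substrate has them. The critical-fractile optimum is $q^\star(x)=F^{-1}\bigl((p-x)/p\bigr)$, where $F$ is the demand CDF. This is an explicit function of the price fed in.

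Next I define the oracle agent. It plays $q^\star(\hat c)$ on the corrupted episode, and the clean counterfactual plays $q^\star(c)$ under the same seed, so $D$ is shared. The paired loss $\ell=C(q^\star(\hat c))-C(q^\star(c))$ is then a deterministic function $g(c,\hat c,D,p)$. The materiality indicator $\mathbf 1\{\ell\ge\tau_m C(q^\star(c))\}$ is likewise a function of $(c,\hat c,D,p,\tau_m)$. I would note that in the observed task the costs are of a sign for which this threshold is meaningful.

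The cell mean is an expectation over the injector's distribution of $\hat c$ given $c$ and over $D\sim F$. The injector and $F$ are frozen benchmark parameters, and independence of the defect from demand (the hypothesis of Proposition~\ref{prop:bound}) lets this expectation factor as an integral over the product measure. This gives the closed form
\[
\bar\rho_{\text{cell}}=\int\!\!\int \mathbf 1\{g(c,\hat c,d,p)\ge \tau_m C(q^\star(c))\}\,dF(d)\,dG(\hat c\mid c).
\]
\textbf{Model-independence is then syntactic.} None of $F$, $G$, $p$, $c$ or $\tau_m$ is indexed by the model tier, since $\hat c$ is set by the injector. Hence $\bar\rho_{\text{cell}}$ is constant across the ladder.

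The upper-estimate claim is the step I expect to be the main obstacle. The idea is that $q^\star(\hat c)$ is the response of an agent with elasticity one, i.e.\ one that tracks the shown price fully. The target inequality is that any real agent's material-deviation probability on a corrupted episode is at most $\bar\rho_{\text{cell}}$.

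I would first try a monotonicity argument for agents whose order interpolates between $q^\star(c)$ and $q^\star(\hat c)$. Under that structure, convexity of $C$ in $q$ makes $\ell$ monotone in the distance the order moves, so less elastic agents incur smaller losses and materiality is monotone too. This is exactly the incompetence-shield mechanism. The obstacle is that an arbitrary agent may deviate for reasons unrelated to the corruption, which would break the bound. So I would state the result under that interpolation assumption, or treat the quantity as an estimate rather than a strict bound, consistent with the Honesty note.

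Finally, the empirical MAE and correlation figures are not part of the proof. They are deferred to Appendix~\ref{app:analysis} as validation.
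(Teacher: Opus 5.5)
Your treatment of what the corollary actually asserts matches the paper's. The paper gives no derivation beyond the by-construction argument you make explicit: $\hat c$ is fixed by the injector, so neither the paired loss nor its materiality indicator has a model argument, and the cell mean is constant across the ladder. It also presents the quantity explicitly as an \emph{estimate} of the upper bound $\rho$, checked by calibration in Appendix~\ref{app:analysis} rather than proved, which is exactly your fallback. Writing the cell mean as an integral against the product of $F$ and $G$ is a harmless population version of the deterministic mean over frozen seeds that the paper computes. You should also average over the cell's true prices $c$, which are likewise not model-indexed.

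The optional conditional bound, however, rests on a false step. For a fixed realised demand $D$, convexity of $C$ does \emph{not} make $\ell$ monotone in how far the order moves. The realised cost is minimised at $q=D$, so suppose $D$ lies between $q^\star(c)$ and $q^\star(\hat c)$. Then a slightly elastic agent moves toward $D$ and has $\ell<0$, while the oracle's $\ell$ can be positive; along the segment $\ell$ first falls and then rises.

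What convexity does give is the chord inequality. Set $q_t=q^\star(c)+t\,(q^\star(\hat c)-q^\star(c))$ and $\ell(t)=C(q_t)-C(q^\star(c))$. Then $\ell(t)\le t\,\ell(1)\le\max\{0,\ell(1)\}$ for $t\in[0,1]$. Hence $\mathbf 1\{\ell(t)\ge\theta\}\le\mathbf 1\{\ell(1)\ge\theta\}$ pointwise, \emph{provided} $\theta=\tau_m C(q^\star(c))>0$. So the monotone object is the materiality indicator, not the loss.

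That proviso fails for your $C(q)=cq-p\min(q,D)$, which is a negative profit. Its threshold is typically negative, so even $\ell=0$ counts as material and the domination can break. You need a nonnegative cost, e.g.\ purchase cost plus a shortage penalty.

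Even repaired, the bound covers only agents whose clean-arm order is $q^\star(c)$ and whose elasticity lies in $[0,1]$. The measured ADR pairs each agent against its \emph{own} clean order (\S\ref{sec:deviations}, item (ii)). Overshooting, or a suboptimal clean order, therefore falls outside the bound. This is why the paper stops at ``upper estimate'' and relies on the appendix calibration instead of an inequality.
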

\noindent The corollary is stated as an estimate of the same upper bound, not a tighter
equality; the residual cell is reported honestly in the appendix rather than fitted away.

\section{The Phase 0 pilot}\label{sec:pilot}
We ran a three-run live pilot on a priced single-period (newsvendor) replenishment
task, where the cost-minimising order depends on the unit price; a stale price
yields a wrong order and a currency loss versus the same-seed clean counterfactual.
Agent \texttt{claude-sonnet-5}, judge \texttt{claude-haiku-4-5}; total spend
\$\PhaseZeroSpend.

\begin{table}[t]\centering
{\small\setlength{\tabcolsep}{4.5pt}
\begin{tabular}{lcccccc}
\toprule
run & prompt & scored & verdict & agent-ADR & oracle-ADR & elasticity \\
\midrule
0a & naive & \PZaScored/100 & AMBIGUOUS & \PZaAdr & \PZaOracleAdr & \PZaElasticity \\
0b & policy & \PZbScored/100 & \emph{rejected} & --- & --- & \PZbElasticity \\
0c & policy & \PZcScored/\PZcEpisodes & \textbf{SUPPORTED} & \PZcAdr & \PZcOracleAdr & \PZcElasticity \\
\bottomrule
\end{tabular}}
\caption{Phase 0 pilot (real live numbers, generated from the results branches).
Behavioral marker/judge AUC on 0c: \PZcMarkerAuc/\PZcJudgeAuc; explicit data-flags
\PZcFlags. \emph{0b is diagnostic only}: its elasticity (\PZbElasticity,
$n{=}\PZbScored$ scored pairs) comes from an \textbf{invalid} run --- the
instrumentation failure the Phase 0c validity gate was built to catch --- and is
not evidence; it is reported for methods integrity, never interpreted.}
\end{table}

\paragraph{Silence, and the incompetence shield.} On the valid run (0c) the competent
agent acts on the stale price \PZcAdr{} of the time---matching the metadata-blind
oracle (\PZcOracleAdr)---while its transcripts carry no discriminating signal
(marker AUC \PZcMarkerAuc, judge AUC \PZcJudgeAuc, both at chance; flags \PZcFlags).
The naive agent (0a) converts \PZaAdr: it is \emph{inelastic} (decision elasticity
\PZaElasticity, derived from the 0a log; clean-arm regret \$\PZaRegret{} versus the
oracle), barely tracking the price, and is thus shielded from the defect only by
being bad at the task. Making the agent competent (elasticity \PZcElasticity) made
it vulnerable.

\paragraph{Methods integrity: GIGO in our own pipeline.} Run 0b returned
\texttt{SUPPORTED} on only \PZbScored/100 scored pairs---the agent's replies were
truncated before the decision line, and a brittle parser dropped the rest. A verdict
read from five pairs is meaningless. We report 0b as-is; it motivated a
\emph{validity precondition} (any run scoring $<80/100$ is \texttt{INVALID}
regardless of metrics). We caught garbage-in-garbage-out in our own measurement
instrument---a fitting first result for a data-quality project, reported rather than
hidden.

\section{GIGO-Bench}\label{sec:gigo}
GIGO-Bench freezes a conditions matrix: eight taxonomy-v0 corruption classes
$\times$ four rates $\{2,5,10,20\%\}$ $\times$ six mitigation arms (A no-gate, B
advisory, C payload-only critic, D DQ gate, E oracle, F$(v)$ upstream cleaning),
$=192$ cells. Metrics: ADR; loss (median/P90/P99/mean); detection rate by
class$\times$arm; false-block rate; completion rate; recovery ratio; gate overhead
latency; staleness-bound coverage. Seeds and workload-level splits are frozen;
\texttt{make gigo-verify} checks a re-run against the committed reference within
per-cell tolerances. The checked-in reference is the deterministic \emph{mock}
matrix (a pipeline reference); live numbers land via the Part-4 experiment kits.

\paragraph{Arm information access (defined before results).} What each arm may \emph{see}
determines what it can detect, so we fix it here. \textbf{C} (payload-only critic) sees
record payloads with no freshness/lineage/provenance. \textbf{D} (Pre-Action Gate) evaluates
metadata predicates and, on \texttt{quarantine-and-substitute}, draws a governed-buffer
value. \textbf{E} (oracle) acts on the true value---an upper bound, not deployable.
\textbf{F$(v)$} is an \emph{idealized upstream-cleaning comparator}: for a fraction $v$ of
corrupted episodes it restores the known clean benchmark value, so it has privileged
ground-truth access in the synthetic benchmark and represents an optimistic,
upper-bound-style cleaning mechanism rather than a deployable detector. The defensible
operational comparison for the gate is therefore against the realistic payload-only critic
C, not against E or F$(v)$.

\paragraph{Signed paired loss (defined before results).} Loss is a same-seed
\emph{difference}: the realised cost of the agent's order on the corrupted view minus its
order on the true-price view, sharing one demand draw. Because the shared demand realisation
still enters both realised costs, a corrupted or substituted path can occasionally yield a
marginally \emph{lower} realised cost than the clean-view path, so individual paired losses
can be slightly negative. We retain signed values (no clipping; clipping was not
preregistered) and interpret values near zero by their bootstrap interval: a small negative
mean, or any interval crossing zero, is \emph{no established effect}, never evidence of
systematic improvement or of recovery beyond 100\%.

\section{Results (H1--H4)}\label{sec:results}
All predictions were pre-registered (\texttt{reports/prereg/}) and frozen before any
run; every number below is a macro generated from a committed \texttt{results/<exp>-live}
summary (\texttt{paper/scripts/ingest\_results.py}), never hand-entered. All four
experiments were re-run on the corrected, cap-hardened harness ($\texttt{policy\_instructed}$
decider, paired counterfactual loss, rate-dependent sampler); the invalid first-wave and
one cap-truncated run are retained on their branches for audit and are \emph{not} ingested
(see the deviations note and \texttt{reports/FINDINGS.md}).

\paragraph{H1 (Silence) --- \ResHoneFull.}
Under the competent, metadata-blind decider, a corrupted price is silently converted into a
material order defect on \HoneFullMetaAdr{} of metadata-borne episodes, while the agent's
transcripts carry no discriminating doubt signal. The loss-conversion channel is real and
the decider is silent about it, as predicted (agent-ADR $\ge 20\%$; behavioral AUC $\le 0.60$).

\paragraph{H1 model-tier ladder --- \ResHoneLadder{} (headline).}
Across four product tiers---haiku-4-5\,$\to$\,sonnet-5\,$\to$\,opus-4-8\,$\to$\,fable-5,
increasingly expensive and nominally more capable, spanning $\approx 15\times$ in inference
price---the metadata-borne ADR is flat-to-rising (\HoneLadderAdrHaiku{} at haiku,
\HoneLadderAdrFable{} at fable) while \emph{both} silence signals stay pinned at the null:
behavioral marker AUC $\le \HoneLadderAuc{}$ and explicit-flag fraction \HoneLadderFlag{} at
every rung. \textbf{Capability does not buy skepticism}: a frontier decider seeing only the
payload is exactly as silent, and exactly as vulnerable, as a small one --- the empirical
core of the placement thesis. (The LLM-judge AUC, the second behavioral quantity, is
deferred; the claim rests on the marker AUC and flag fraction.)

\paragraph{H2 (Detection asymmetry) --- \ResHtwo{}.}
The registered clean channel split (C\,$\approx$\,D on payload-visible \emph{and}
C\,$\ll$\,D on metadata-borne) was \emph{not} supported. The honest evidence is per class
(Table~\ref{tab:h2}): detection follows the signal actually exposed to each mechanism and
the predicates actually implemented, not the assigned taxonomy label. The gate strictly
dominates the payload-only critic on freshness (\texttt{stale\_master\_data}: critic
\HtwoDetStaleC{}, gate \HtwoDetStaleG{}) and schema-conformance (\texttt{schema\_drift}:
\HtwoDetSchemaC{}\,$\to$\,\HtwoDetSchemaG{}; \texttt{missing\_mandatory\_field}:
\HtwoDetMissingC{}\,$\to$\,\HtwoDetMissingG{}) --- signals the critic does not observe. Two
labels cross the assigned boundary: \texttt{schema\_drift}, registered as payload-visible,
is missed by the critic and caught by the deterministic schema predicate; and
\texttt{superseded\_golden\_record}, registered as metadata-borne, is \emph{visible} to the
critic (\HtwoDetSupersededC{}) because the companion record creates a payload-level
contradiction. Both mechanisms fail on the two uncovered defects
(\texttt{silent\_unit\_change}, \texttt{plausible\_outlier}: 0\%/0\%), reported as coverage
gaps, not patched. The experiment therefore converts the payload/metadata distinction from
an assigned taxonomy into an \emph{empirically observed detection boundary}: detectability
depends on which representation and companion evidence a mechanism receives, and on the
predicates it implements. The pooled metadata figures (critic \HtwoCriticMeta{}, gate
\HtwoGateMeta{}) are a secondary summary \emph{under the preregistered labels}, and inherit
those labels' incompleteness.

\begin{table}[t]\centering\small
\setlength{\tabcolsep}{5pt}
\begin{tabular}{@{}l l l l@{}}
\toprule
corruption class & critic C & gate D & registered channel \\
\midrule
\texttt{stale\_master\_data}        & \HtwoDetStaleC\ci{\HtwoCIStaleCLo}{\HtwoCIStaleCHi}           & \HtwoDetStaleG\ci{\HtwoCIStaleGLo}{\HtwoCIStaleGHi}           & metadata-borne \\
\texttt{schema\_drift}              & \HtwoDetSchemaC\ci{\HtwoCISchemaCLo}{\HtwoCISchemaCHi}        & \HtwoDetSchemaG\ci{\HtwoCISchemaGLo}{\HtwoCISchemaGHi}        & payload-visible$^\dagger$ \\
\texttt{missing\_mandatory\_field}  & \HtwoDetMissingC\ci{\HtwoCIMissingCLo}{\HtwoCIMissingCHi}     & \HtwoDetMissingG\ci{\HtwoCIMissingGLo}{\HtwoCIMissingGHi}     & payload-visible \\
\texttt{superseded\_golden\_record} & \HtwoDetSupersededC\ci{\HtwoCISupersededCLo}{\HtwoCISupersededCHi} & \HtwoDetSupersededG\ci{\HtwoCISupersededGLo}{\HtwoCISupersededGHi} & metadata-borne$^\dagger$ \\
\texttt{cross\_source\_contradiction} & \HtwoDetCrossC\ci{\HtwoCICrossCLo}{\HtwoCICrossCHi}         & \HtwoDetCrossG\ci{\HtwoCICrossGLo}{\HtwoCICrossGHi}           & payload-visible \\
\texttt{duplicate\_vendor\_conflicting\_terms} & \HtwoDetDupC\ci{\HtwoCIDupCLo}{\HtwoCIDupCHi}     & \HtwoDetDupG\ci{\HtwoCIDupGLo}{\HtwoCIDupGHi}                 & payload-visible \\
\texttt{silent\_unit\_change}       & \HtwoDetUnitC\ci{\HtwoCIUnitCLo}{\HtwoCIUnitCHi}              & \HtwoDetUnitG\ci{\HtwoCIUnitGLo}{\HtwoCIUnitGHi}              & metadata-borne (gap) \\
\texttt{plausible\_outlier}         & \HtwoDetOutlierC\ci{\HtwoCIOutlierCLo}{\HtwoCIOutlierCHi}     & \HtwoDetOutlierG\ci{\HtwoCIOutlierGLo}{\HtwoCIOutlierGHi}     & metadata-borne (gap) \\
\bottomrule
\end{tabular}
\caption{H2 per-class detection: payload-only critic (C) vs.\ metadata-aware gate (D),
pooled over rate cells (generated from the committed summary). Each cell shows the point
detection rate with its Wilson 95\% interval in percentage points ($n=\HtwoDetN$ corrupted
episodes per class). $^\dagger$~marks the two classes whose empirical detection crosses their
preregistered channel label---the observed boundary is not the assigned one.}
\label{tab:h2}
\end{table}

\subsection{The channel boundary is empirical rather than taxonomic}\label{sec:channel}
The payload/metadata split is often treated as a fixed property of a defect \emph{class}. Our
data shows it is instead a property of \emph{what a given mechanism is shown}. Two registered
labels cross their own boundary, and both directions matter. \texttt{schema\_drift} is
registered payload-visible, yet the realistic payload-only critic misses it (\HtwoDetSchemaC{})
while the deterministic schema predicate catches it (\HtwoDetSchemaG{}): a ``payload-visible''
defect is invisible to a reader that does not implement the right check.
\texttt{superseded\_golden\_record} is registered metadata-borne, yet the critic \emph{does}
see it (\HtwoDetSupersededC{})---not through metadata, but because the current-golden companion
record creates a payload-level contradiction the critic can read. Detectability is therefore
determined by the triple (representation shown, companion evidence available, predicates
implemented), not by the class name. This reframing is what lets the same architecture port to
a new domain without re-deriving a taxonomy (Appendix~\ref{app:analysis}): one moves the
enforcement point to where the signal actually is, then implements the predicate that reads it.
The boundary is drawn by the mechanism, not by the label.

\paragraph{H3 (Gating dominance) --- \ResHthree{} as written.}
The registered Pareto-dominance claim (D dominates B, C, F$(v)$) fails for two honest
reasons: the false-block axis is degenerate (every arm false-blocks at \HthreeFalseBlock{},
so no frontier is traced), and---as anticipated by the arm definition
(\S\ref{sec:gigo})---F$(v)$ benefits from benchmark ground truth and should not be
interpreted as a realistic detection system. Against the realistic payload-only critic C,
the gate D \emph{does} dominate: residual loss \HthreeGateResid{}
\ci{\HthreeGateResidLo}{\HthreeGateResidHi} vs.\ \HthreeCriticResid{}
\ci{\HthreeCriticResidLo}{\HthreeCriticResidHi} (bootstrap 95\% CIs, \S\ref{app:analysis})
and detection \HthreeGateDet{} vs.\ \HthreeCriticDet{} ($2\times$) at equal zero false-block
--- the same metadata-channel advantage as H2. We retain F$(v)$ (rather than drop it) so
this optimistic upper bound stays visible.

\paragraph{H4 (Downstream sufficiency) --- \ResHfour{}.}
The registered target (portfolio recovery $\ge 0.80$) is not met: portfolio recovery is
\HfourRecovery{}. The decomposition is the honest result. Where the gate has a predicate
(freshness), it fully recovers: \texttt{stale\_master\_data} loss falls from
\HfourStaleLossA{} \ci{\HfourStaleLossALo}{\HfourStaleLossAHi} (ungated) to \HfourStaleLossD{}
\ci{\HfourStaleLossDLo}{\HfourStaleLossDHi} (gated; bootstrap 95\% CIs, \S\ref{app:analysis};
the slightly-negative gated value
is within noise of zero per the signed-loss convention of \S\ref{sec:gigo}, not recovery
beyond 100\%). The portfolio number is dominated
by \texttt{silent\_unit\_change} --- the one class carrying a clearly non-zero ungated loss
and exactly the defect the gate has no predicate for (a pre-registered coverage gap, unit
consistency deferred to v1.1). Source-write and lineage predicates (P2/P3) hold by
construction. More episodes cannot move the headline: the gap is structural, not statistical.

\paragraph{Ablations / Tier-2.} Not run in this campaign; the harness and
pre-registrations are in place for a follow-up.

\medskip\noindent
Taken together: the loss-conversion is real and flat across model tiers (H1), and the gate's
value is concentrated on the metadata channel it is designed for --- it beats a realistic
payload-only critic (H2, H3) and fully recovers the defects it covers (H4, freshness) --- while
being transparent about its named coverage gaps. Every figure is generated from logged results
only; no number is hand-entered.

\section{Implications}\label{sec:implications}
Two brief consequences of the three claims; both support M3 rather than extend it.
\paragraph{Economics.} ADR is the action-level micro-foundation of the firm-level
adoption bottleneck. If autonomous value is a $\min(\cdot)$ of complements---capability
\emph{and} trustworthy evidence---then raising capability while evidence quality lags
does not move the $\min$. ADR measures exactly the complement that gating restores.
\paragraph{Governance.} Metadata-borne evidence defects are precisely the data-quality
management and ``examination in view of possible biases'' that EU AI Act Article~10
requires; ISO/IEC 5259~\cite{iso5259} supplies the dimension vocabulary. SARC-DQ's
versioned evidence sets (Prop.~\ref{prop:lineage}) are auditable artifacts for such
obligations. This is the only place we use risk-tier language.

\section{Threats to validity}\label{sec:threats}
\paragraph{Construct validity (are we measuring the right thing?).} Loss is a currency
delta against a same-seed clean counterfactual, so it isolates the corruption effect from
the agent's own decision noise---but it is defined on one realised demand draw per
episode, so single-episode losses are noisy and we report them pooled with bootstrap CIs.
``Silence'' is operationalised as a behavioral doubt-marker AUC and an explicit-flag
fraction; the third intended quantity, an LLM-judge AUC, is deferred, so the silence claim
rests on two of three signals. Materiality uses a fixed threshold ($\tau_m$); we did not
sweep it.
\paragraph{Internal validity (could an artifact explain the result?).} The decider is the
\texttt{policy\_instructed} agent, chosen \emph{after} an initial run to remove a
competence confound (Deviations~\S\ref{sec:deviations}, item~i); both readings are
reported. One re-run was silently truncated by a spend cap and caught in verification
(item~vii); the rate-axis re-runs (H3, H4) and the ladder are on the cap-hardened harness
with zero API errors, and H1-full and H2 (run on the corrected harness before that hardening
was added) were verified complete---all classes ran, no truncation. Seeds, sampler, prompt,
loss definition, and per-run instrumentation are pinned per experiment
(\texttt{EXPERIMENT\_STATUS.md}).
\paragraph{External validity (does it generalise?).} We establish a \emph{mechanism and a
boundary condition, not universal effectiveness} across enterprise workflows. The evidence
is a single decision domain (priced replenishment); corruption is \emph{injected}, not
observed in a live production system; the task has a known clean counterfactual, which is
rarely available in production; the gate \emph{assumes} metadata is present and correct;
predicate coverage is incomplete by design; the ladder uses four Claude tiers from one
vendor, so the flat-across-tiers result is shown within, not across, model families; and
inference-price tier is not itself a scientific measure of capability. The mock GIGO matrix
is a pipeline reference, not evidence about real models. Public base-rate evidence is absent
for \texttt{silent\_unit\_change} and \texttt{plausible\_outlier}: their injector parameters
are declared, flagged defaults, and Tier-2 calibration on public labeled data is future work
(\texttt{benchmarks/gigo/CALIBRATION.md})---surfacing which classes lack a public base rate
is itself part of the calibration contribution. The central supported result is narrow:
absent metadata cannot be recovered by model scaling alone, and a gate with access to the
relevant signal can act where a payload-only mechanism cannot; the magnitude of operational
benefit remains domain-, predicate-, and metadata-quality-dependent.
\paragraph{Statistical-conclusion validity (are the numbers robust?).} Per-class H3/H4
estimates at $n{=}100$ (rate axis) have wide CIs; only \texttt{silent\_unit\_change}
carries a clearly non-zero ungated loss, and we say so rather than over-read the rest.
The H2/H3/H4 headline predictions are reported as \emph{not supported as written}, with
the honest decomposition, not re-fit to pass.

\section{Deviations and clarifications}\label{sec:deviations}
We disclose deviations from the pre-registration so verdicts are read against a
transparent record (full detail in \texttt{reports/FINDINGS.md} and
\texttt{reports/prereg/ADDENDUM-2026-07-09.md}). (i)~\emph{Prompt variant.} The
experiments were run under the \texttt{policy\_instructed} agent (the explicit
newsvendor policy, no data-quality language), pinned after an initial
\texttt{naive}-prompt run returned a uniform-null ADR that reflects decision
incompetence, not data quality; the variant existed in Phase~0b/0c precisely to remove
that confound, and both readings are reported. (ii)~\emph{Loss instrumentation.} Loss
and materiality are measured against the \emph{same agent's} order on the true price
(the paired Phase~0 definition), not against the theoretical optimum; under the old
definition an oracle acting on clean data was scored as heavily defective, i.e.\ the
metric measured the agent's own decision noise rather than corruption. (iii)~\emph{Rate axis.} H1/H2 fix a per-cell corrupted
count (rate as a stratification label) so per-corrupted metrics rest on an equal $n$;
H3/H4 keep true rates, which drive their false-block economics. (iv)~\emph{Silence
sub-claim.} Transcript-based silence metrics were added and logged in the corrected runs:
across the model-tier ladder the behavioral marker AUC ($\le \HoneLadderAuc$) and
explicit-flag fraction (\HoneLadderFlag) both support silence at every rung; the second
behavioral quantity, the LLM-judge AUC, remains deferred (it requires a paid judge turn per
transcript). (v)~\emph{H2 outcome.} The strict channel conjunction is not supported: the
payload/metadata boundary is empirical, the measurement reclassifies \texttt{schema\_drift}
(gate-dominated) and \texttt{superseded\_golden\_record} (payload-detectable), and names two
predicate coverage gaps (\texttt{silent\_unit\_change}, \texttt{plausible\_outlier}) held as
v1.1 future work rather than patched. (vi)~\emph{H3/H4 outcomes.} Both registered headline
predictions are also not supported as written and are reported as measured (Sec.~\ref{sec:results}):
H3's false-block frontier is degenerate and its F$(v)$ arm is a partial oracle, so the fair
comparison is gate-vs-realistic-critic (which the gate wins); H4's portfolio recovery
(\HfourRecovery) is dominated by the \texttt{silent\_unit\_change} coverage gap while the gate
fully recovers the freshness channel it covers. (vii)~\emph{Spend-cap incident and harness
hardening.} One H4 re-run was silently truncated by an API spend cap --- the client swallowed
the transport error and the run committed as apparently-complete (96/96) with a nonsensical
recovery ratio. It was caught in verification, invalidated, and the harness was hardened so a
transport failure is counted and the run aborts-and-resumes rather than committing silent
zeros; the rate-axis re-runs (H3, H4) and the ladder then ran on this hardened harness
(\texttt{instrumentation=api-error-aware-v1}, zero API errors), while H1-full and H2, which
predate the hardening, were verified complete by direct inspection. All predictions are
reported as measured, including where they fail.

To state the integrity boundary plainly: \emph{No threshold, prediction, or verdict
rule was altered at any point in this project; only instrument configuration, under a
dated addendum.} On the silence provenance: \emph{P1 (loss-conversion) is supported under
\texttt{policy\_instructed} and is flat across the model-tier ladder; the behavioral
silence signals (P2 marker AUC, P3 flag fraction) are now measured and support silence at
every rung, while the LLM-judge AUC remains deferred.} On the superseded first wave:
\emph{All first-wave experiment runs are invalid and retained on their branches for audit.}
And on H2: \emph{H2 failed its original registration on the corrected re-run and is reported
as failed; the reframe, not a pass, is the contribution.}

\section{Reproducibility and research transparency}\label{sec:repro}
\paragraph{Commands.} \texttt{make quality} (lint/type/test), \texttt{make smoke}
(Phase 0 mock), \texttt{make gigo-verify} (benchmark drift), \texttt{make status}
(regenerate the experiment manifest), \texttt{make arxiv} (macros + camera-ready source).
Phase 0 config hashes: naive \texttt{c8202a18b58754d8}, policy \texttt{e785bdc87009b84c}.
Per-experiment branch, SHA, prompt/sampler/loss version, and \texttt{config\_hash} are in
\texttt{EXPERIMENT\_STATUS.md} (the single source of truth); seeds, splits, and metric
definitions in \texttt{benchmarks/gigo/SPEC.md}.
\paragraph{Transparency (by construction, not assertion).} We hold this project to the
standard it studies. (i)~\emph{No result is hand-entered}: every printed value is a macro
generated from a committed \texttt{results/<exp>-live} summary. (ii)~\emph{No historical
data were deleted}: invalid first-wave runs and one cap-truncated run remain on their
branches and are listed, with grounds, in \texttt{EXPERIMENT\_STATUS.md}. (iii)~\emph{Every
number is provenance-linked} to a branch, SHA, and \texttt{config\_hash}. (iv)~\emph{Failures
are reported, not hidden}: a garbage-in-garbage-out event in our own parser (\S\ref{sec:pilot})
and a spend-cap truncation (\S\ref{sec:deviations}) are both written up as results.
(v)~\emph{No threshold, prediction, or verdict rule was ever altered}; only instrument
configuration, under a dated addendum.

\section{Conclusion}
We set out three claims: metadata-borne defects are structurally invisible to a
payload-only agent (M1); scaling to higher model tiers does not fix this---silence and
loss-conversion are flat across four model tiers spanning $\approx15\times$ in inference
price (M2); and the remedy is architectural, a metadata-aware Pre-Action Gate at the point
of action (M3). The gate wins on the channel it covers and, where it does not cover a
defect, we report the gap rather than patch it. SARC-DQ does not eliminate upstream
data-quality work and does not detect defects for which no predicate or trustworthy
metadata exists; runtime gating \emph{complements} rather than replaces upstream governance,
and its benefit is predicate- and metadata-quality-dependent. Its contribution is narrower
and architectural: when the discriminating signal exists outside the model's payload
context, reliability requires an enforcement mechanism that can access and act on that
signal. The history of AI safety has largely focused on constraining model \emph{behavior}.
Our results suggest an equally important and largely orthogonal axis: constraining the
quality of the \emph{evidence} on which autonomous decisions are based. As agents move from
answering to acting, runtime evidence governance may become as fundamental to enterprise AI
as memory protection became to operating systems---a cheap, architectural guarantee that a
fast, powerful process cannot act on state it was never allowed to trust.

\paragraph{The flat ladder is not only empirical.} A model-free analytical oracle derived
from the task's decision geometry predicts the conversion of a metadata-borne defect into a
loss \emph{without any model-tier term}---the corrupted price it acts on is fixed by the
injected defect, not by the agent---and this prediction tracks the measured ADR to MAE
\ConvMAE{} (Pearson $r=\ConvR$; the measured interval covers the prediction in
\ConvCICoverage{} cells, Appendix~\ref{app:analysis}). So the ladder's flatness is explained,
not merely observed: capability is absent from the mechanism that sets conversion, while
enforcement placement is the factor that changes it. We claim this only for the tested
architecture and decision setting, not as universal model independence.

\appendix
\section{Analytical companion (deterministic, \$0)}\label{app:analysis}
This appendix strengthens the empirical claims with analytical explanations and robustness
analyses that add \emph{no} new experiment. Every value and figure here is a deterministic,
zero-cost recomputation from the frozen substrate, the frozen injectors, and the committed
result summaries---produced by \texttt{make analysis} (package \texttt{analysis/}, outputs in
\texttt{analysis/out/}, macros in \texttt{generated/analysis.tex}). No threshold, verdict, or
generated experiment value is altered; where an analytical prediction misses a measurement we
report the discrepancy rather than fit it away.

\subsection{The conversion law and its calibration (M2)}
Corollary~\ref{cor:conv} gives a closed-form, model-independent estimate of the conversion
factor $\rho$. Fig.~\ref{fig:conv} calibrates it against the measured agent ADR over the
priced metadata-borne cells (h1-full, arm A): mean absolute error $\ConvMAE$, Pearson
$r=\ConvR$, and the measured 95\% Wald interval covers the oracle prediction in
$\ConvCICoverage$ cells. Because the shown price is fixed by the injected corruption and not by
the agent, model tier is absent from the law by construction---the analytical form of the flat
ladder (M2). The single cell whose interval excludes the prediction is retained in
\texttt{analysis/out/conversion\_law.json}, not discarded.

\begin{figure}[t]\centering
\includegraphics[width=0.52\linewidth]{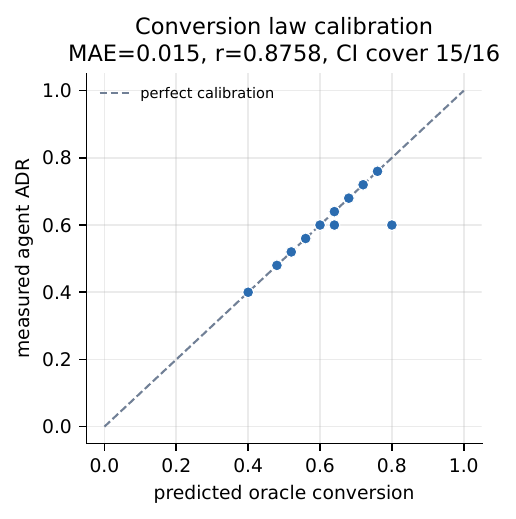}
\caption{Predicted oracle conversion vs.\ measured agent ADR. Proximity to the diagonal is
the calibration; the agent tracks the closed-form, model-free prediction.}
\label{fig:conv}
\end{figure}

\subsection{H4 as a coverage-accounting identity}
Portfolio recovery decomposes exactly as
$\text{recovery}=(L_A-L_D)/(L_A-L_E)$ with $L_x=\sum_c w_c\,\ell_{x,c}$, class weight
$w_c=n_c/N$, and $L_E=0$. Recomputed from the committed h4-recovery losses this identity
reproduces the reported portfolio recovery ($\CovRecovery$) to a residual of $\CovResidual$.
The decomposition (Fig.~\ref{fig:cov}) explains \emph{why} the headline is low without any
appeal to statistics: the single uncovered class $\CovDominant$ carries $\CovDominantSharePct$\%
of the recoverable denominator, so a gate that (by design) has no predicate for it cannot move
the portfolio number---exactly the pre-registered coverage gap.

\begin{figure}[t]\centering
\includegraphics[width=0.72\linewidth]{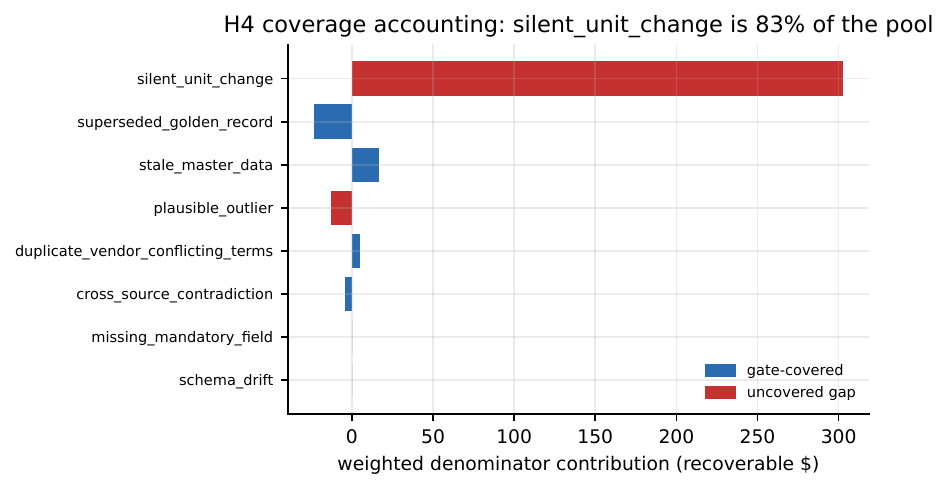}
\caption{Weighted denominator (recoverable \$) by class; blue = gate-covered, red = uncovered
gap. One uncovered, high-magnitude class dominates the recoverable pool.}
\label{fig:cov}
\end{figure}

\subsection{Formal coverage matrix}
Table~\ref{tab:coverage} states, per class, where the discriminating signal lives, whether each
mechanism can observe it, the implemented v1 predicate, and the remaining limitation---the
concise map of what the architecture does and does not cover.

\begin{table}[t]\centering\small
\setlength{\tabcolsep}{4pt}
\begin{tabular}{@{}lccccl@{}}
\toprule
class & signal & critic & gate & predicate & remaining limitation \\
\midrule
\texttt{stale\_master\_data}        & metadata & no  & yes & freshness    & needs a fresh clock \\
\texttt{superseded\_golden\_record} & meta/payload & yes & yes & golden-unique & needs version/companion \\
\texttt{schema\_drift}              & payload  & no  & yes & schema       & needs the typed check \\
\texttt{missing\_mandatory\_field}  & payload  & yes & yes & complete     & --- \\
\texttt{cross\_source\_contradiction} & payload & yes & yes & consistency & needs both sources keyed \\
\texttt{duplicate\_vendor\_conflicting\_terms} & payload & yes & no & (consistency) & alt-keyed rows not compared \\
\texttt{silent\_unit\_change}       & none printed & no & no & --- & v1.1 unit-consistency gap \\
\texttt{plausible\_outlier}         & lineage only & no & no & --- & v1.1 outlier gap \\
\bottomrule
\end{tabular}
\caption{Coverage matrix (frozen gate design). ``critic''/``gate'' record whether the
payload-only critic / metadata-aware gate can observe the defect; the last two classes are the
declared, unpatched gaps.}
\label{tab:coverage}
\end{table}

\subsection{Statistical reporting: row-level uncertainty}
\emph{Ladder flatness (H1).} The ``flat to slightly rising'' claim is quantified from the
committed h1-ladder pooled counts over the metadata-borne classes: the top-minus-bottom
endpoint difference is $\LadderEndpointDiff$ with a two-proportion 95\% Wald interval
$[\LadderEndpointCILo,\LadderEndpointCIHi]$ that includes zero, and the ordinary-least-squares
trend is $\LadderTrendPerTier$ ADR per tier step. Flatness is thus a measured quantity with an
interval, not an adjective.

\emph{H2 detection intervals.} Table~\ref{tab:h2} reports a Wilson 95\% interval for every
per-class detection rate, for both the critic (C) and the gate (D). The detected count is
recovered exactly per rate cell (each stored rate is an integer multiple of $1/25$), never by
multiplying a rounded pooled rate, and pooled to $n=\HtwoDetN$ corrupted episodes per class.

\emph{H3/H4 residual-loss intervals.} The residual-loss quantities in \S\ref{sec:results} carry
nonparametric bootstrap 95\% intervals: $\BootReplicates$ percentile resamples (fixed seed
$\BootSeed$) over the committed per-corrupted-episode paired losses, the resampling unit. The
signed paired-loss convention is preserved (losses are never clipped), the point estimate is
left at the committed pool mean, and an interval that crosses zero---such as the gated
\texttt{stale\_master\_data} loss---is reported as consistent with zero, not as recovery beyond
the measured value.

\subsection{Robustness: leave-one-class-out, ablations, and threshold sensitivity}
Removing the dominant uncovered class $\LooTopClass$ from the portfolio moves recovery from
$\LooFull$ to $\LooTopRecovery$ (Fig.~\ref{fig:loo})---a structural demonstration that this one
class, not statistical noise, sets the headline. Predicate ablations
(\texttt{analysis/out/robustness.json}) confirm that no v1 predicate subset recovers the
portfolio, because none covers that class. Threshold sensitivity (Fig.~\ref{fig:thresh}) sweeps
the materiality threshold on the oracle conversion---a faithful proxy, given the $\ConvMAE$
calibration above---from a mean ADR of $\ThreshLoAdr$ down to $\ThreshHiAdr$; the registered
$\tau_m=\ThreshReg$ (ADR $\ThreshRegAdr$) is one point on a smooth curve, not a cliff edge, so
the verdicts do not hinge on the threshold choice.

\begin{figure}[t]\centering
\includegraphics[width=0.70\linewidth]{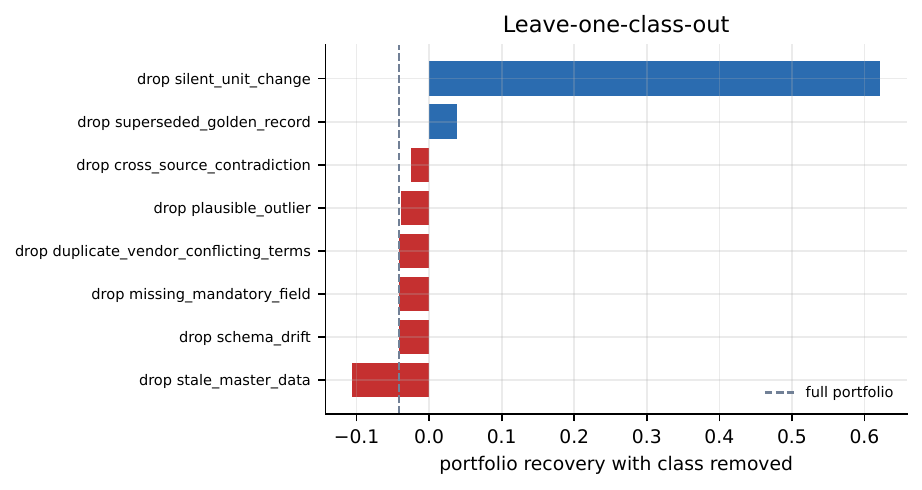}\\[4pt]
\includegraphics[width=0.46\linewidth]{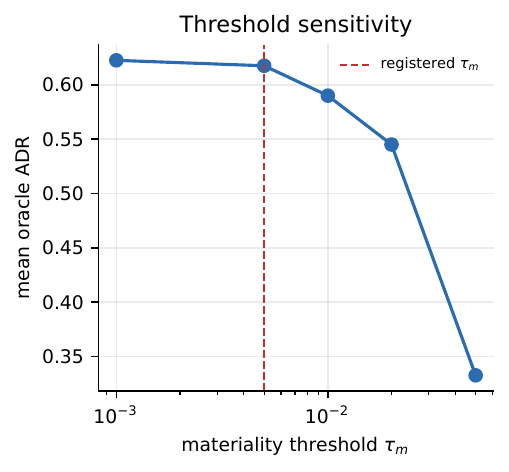}
\caption{Top: portfolio recovery with each class removed (dashed = full portfolio). Bottom:
oracle ADR vs.\ the materiality threshold; the registered $\tau_m$ is marked.}
\label{fig:loo}\label{fig:thresh}
\end{figure}

\subsection{Stress: dependence on metadata quality}
The gate reads metadata, so its detection is only as good as the metadata it is handed.
Fig.~\ref{fig:stress} re-runs the real gate on the covered classes under deterministic
metadata-degradation operators. Pooled detection falls from $\StressIntact$ (intact) to
$\StressAllDegraded$ (all metadata degraded); the most fragile class is $\StressFragile$, whose
freshness signal collapses when the clock is erased. This is an honest boundary: placement helps
to the extent the metadata channel is intact.

\begin{figure}[t]\centering
\includegraphics[width=0.72\linewidth]{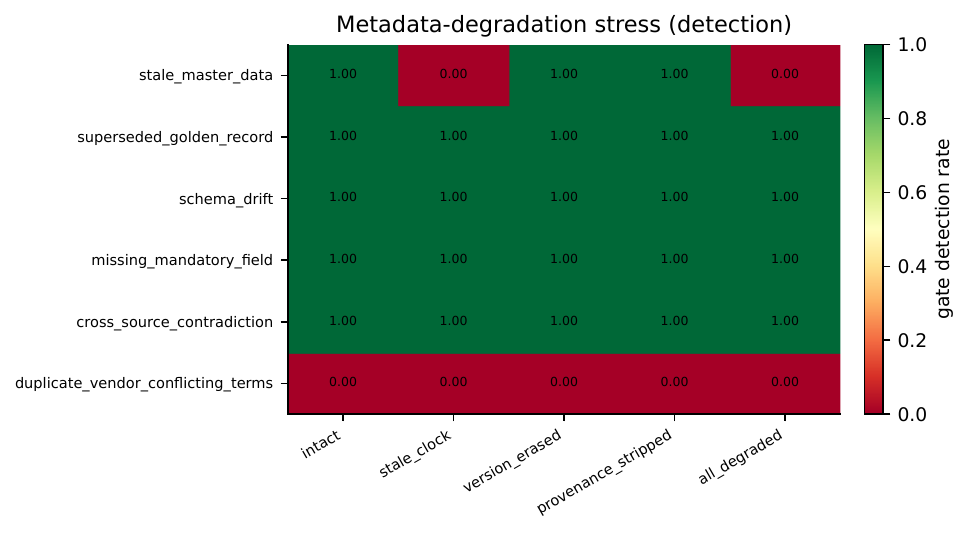}
\caption{Gate detection on covered classes under metadata degradation. Green survives; red
collapses. Detection is a function of metadata quality, not a universal guarantee.}
\label{fig:stress}
\end{figure}

\subsection{Falsification: it is targeting, not frequency}
A natural null is ``the gate just intervenes more.'' We falsify it (Fig.~\ref{fig:fals}) by
comparing, at the \emph{same} intervention budget, the real gate against corruption-blind
controls over the full population. The gate recovers $\FalsGate$ of the positive-loss pool
versus $\FalsRandom$ for a frequency-matched random policy, $\FalsShuffled$ for permuted flags,
and $\FalsAnti$ for an adversarially mis-targeted policy (gate advantage $\FalsGap$). Recovery
tracks alignment with the corruption signal, not how often one intervenes.

\begin{figure}[t]\centering
\includegraphics[width=0.48\linewidth]{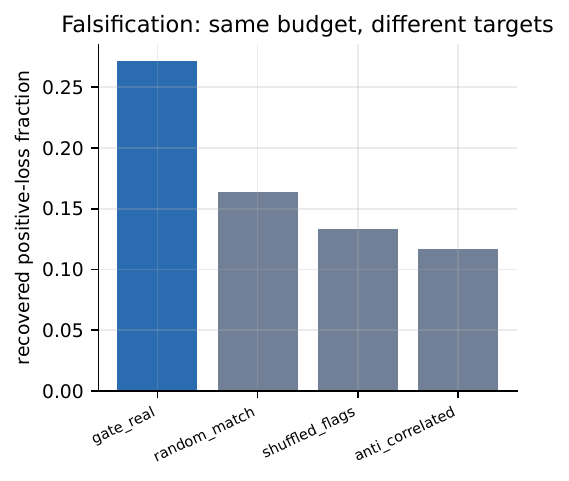}
\caption{Recovered positive-loss fraction at equal intervention budget. The metadata gate (blue)
beats every corruption-blind control; the benefit is targeting, not frequency.}
\label{fig:fals}
\end{figure}

\subsection{A second decision process: architectural portability}
To show the architecture ports---not that it wins universally---we instantiate a structurally
different decision: a B2C personalised-promotion eligibility task (a binary offer, priced and
paired, not a newsvendor order). The same payload/metadata split and the \emph{same} predicate
family (\texttt{freshness}, \texttt{golden\_record\_unique}, \texttt{complete}) apply. Six
metadata-borne promo corruptions (expired campaign, superseded segmentation, stale score,
outdated rules, missing consent, obsolete recommendation) yield a portfolio ADR of $\SecondADR$,
and the reused gate recovers $\SecondRecovery$ of the induced loss (Fig.~\ref{fig:second}). The
ADR profile is domain-specific---eligibility-borne defects (expired campaign $\SecondExpiredADR$,
missing consent $\SecondConsentADR$) dominate here, unlike \texttt{silent\_unit\_change} in
procurement---which is the point: the mechanism transfers even though the loss surface does not.

\begin{figure}[t]\centering
\includegraphics[width=0.74\linewidth]{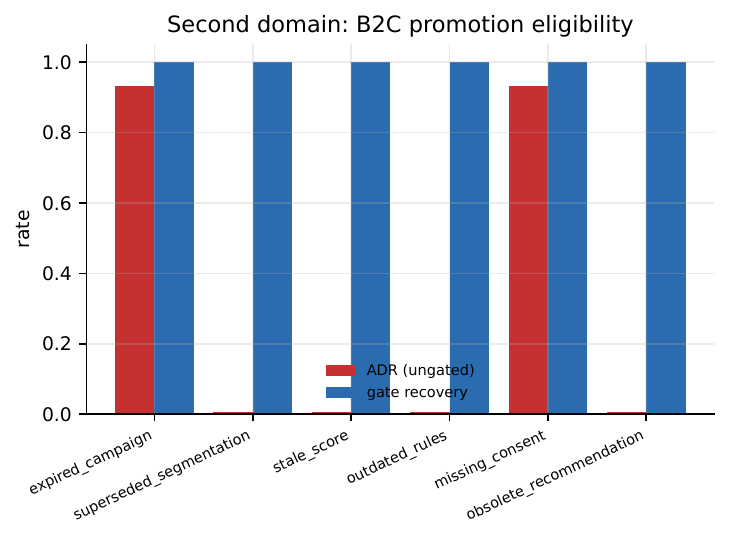}
\caption{Second domain (B2C promotion eligibility): per-class ungated ADR (red) and gate
recovery (blue) using the identical predicate family. The architecture ports; performance is
domain-specific.}
\label{fig:second}
\end{figure}

\end{document}